%% LyX 2.0.5 created this file.  For more info, see http://www.lyx.org/.
%% Do not edit unless you really know what you are doing.
\documentclass[10.5pt,a4paper]{article}
\usepackage[latin9]{inputenc}
\usepackage{amsthm}
\usepackage{amsmath}
\usepackage{amssymb}
\usepackage{graphicx}
\usepackage{esint}
\usepackage{amssymb}
\usepackage{float}
\usepackage{caption}
\usepackage{subcaption}
\usepackage{color}
\makeatletter

%%%%%%%%%%%%%%%%%%%%%%%%%%%%%% LyX specific LaTeX commands.
%\special{papersize=\the\paperwidth,\the\paperheight}

%%%%%%%%%%%%%%%%%%%%%%%%%%%%%% Textclass specific LaTeX commands.
\theoremstyle{remark}
  \newtheorem{example}{\protect\examplename}
\theoremstyle{plain}
\newtheorem{thm}{\protect\theoremname}
 \theoremstyle{plain}
 \newtheorem{defn}{\protect\definitionname}
\theoremstyle{remark}
\newtheorem{rem}{\protect\remarkname}
\theoremstyle{plain}

\theoremstyle{plain}
\newtheorem{lem}{\protect\lemmaname}
\providecommand{\lemmaname}{Lemma}
\providecommand{\propositionname}{Proposition}
\providecommand{\remarkname}{Remark}
\providecommand{\examplename}{Example}
\providecommand{\definitionname}{Definition}
\providecommand{\theoremname}{Theorem}

%\@ifundefined{date}{}{\date{}}
%%%%%%%%%%%%%%%%%%%%%%%%%%%%%% User specified LaTeX commands.
%%%%%%%%%%%%%%%%%%%%%%%%%%%%%%%%%%%%%%%%%%%%%%%%%%%%%%%%%%%%%%%%%%%%%%%%%%%%%%%%
%2345678901234567890123456789012345678901234567890123456789012345678901234567890
%        1         2         3         4         5         6         7         8

%\documentclass[letterpaper, 10 pt, conference]{ieeeconf}  % Comment this line out
                                                          % if you need a4paper
%\documentclass[a4paper, 10pt, conference]{ieeeconf}      % Use this line for a4
                                                          % paper
%\IEEEoverridecommandlockouts                              % This command is only
                                                          % needed if you want to
                                                          % use the \thanks command
%\overrideIEEEmargins
% See the \addtolength command later in the file to balance the column lengths
% on the last page of the document
\usepackage{amsfonts}\usepackage{color}\usepackage{mathrsfs}\usepackage{wasysym}\usepackage{float}

\textheight 25cm
 \textwidth 15cm
 \topmargin -1cm
 \oddsidemargin 0cm
 \evensidemargin -0.25cm
 \hoffset 0.5cm
 \marginparsep 0cm
 \headsep
 \baselineskip
 \marginparwidth 0cm
 \headheight 0cm
 \columnsep 0cm

\makeatother

\begin{document}

\title{Stochastic stability and stabilization of a class of state-dependent
jump linear systems\footnote{\textit{preprint submitted to Nonlinear Analysis: Hybrid Systems}}}

\author{Shaikshavali Chitraganti\thanks{CRAN--CNRS UMR 7039, Universit\'{e} de Lorraine, 54500 Vandoeuvre-l\'{e}s-Nancy, France.} \and Samir Aberkane\footnotemark[2] \and Christophe Aubrun\footnotemark[2] }
\date{}

\maketitle
\thispagestyle{empty} % \pagestyle{empty} 

\begin{abstract}
This paper deals a continuous-time state-dependent jump linear
system, a particular kind of stochastic switching system. In particular,
we consider a situation when the transition rate of the random jump
process depends on the state variable, and addressed the problem of
stochastic stability and stabilization analysis for the proposed system.
Numerically solvable sufficient conditions for the stochastic stability
and stabilization of the proposed system is established in terms of
linear matrix inequalities. The obtained results are illustrated in
numerical examples.
\end{abstract}
\section{Introduction}\label{section:introduction}

Systems subject to random abrupt changes can be modeled by Random Jump linear systems (RJLS) such as manufacturing systems, networked control systems, economics and finance \textit{etc}. RJLS are a special class of hybrid systems, typically, described by a set of classical differential (or difference) equations and a random jump process governing the jumps among them.

When the random jump process of RJLS is assumed to be a finite state time-homogeneous Markovian process with a known transition rate (or probability), then this particular class of systems are widely known as Markov jump linear systems (MJLS) in the literature. The theory of stability, optimal and robust control, as well as important applications of MJLS can be found for instance in \cite{feng}, \cite{mariton}, \cite{kozin}, \cite{yji}, \cite{fangphd} and the references therein. In general, the studies of MJLS assume that the underlying random process is time-homogeneous Markov, which is quite a restrictive assumption. 

In this article, we consider the analysis of RJLS where the random
jump process depends on the state variable. Such class of systems
are referred to as ``state-dependent jump linear systems (SDJLS)''
in this article. The given problem is motivated by following scenarios.
In fault tolerant control systems, the failure rate of a component
generally depends on its age, wear, accumulated stress etc. It is
reasonable to assume that the failure rate of a component at time
$t$ depends on the state of the component at age $t$, see for example
\cite{bergman}. In this case, state variable may be a measure of
wear, accumulated stress of component etc., which affect its failure
rate. As an another scenario, consider a case of stock market with
two regimes: up and down. The transition rate between the regimes
usually result from the state of economy, the general mood of the
investors in the market etc., which can be regarded in general as
the state of the market. Also, for instance, in \cite{motivation1},
the authors dealt with the problem of describing the underlying reasons
for the failure mechanism of a process and modelled the degradation
or wear of the process as a Markov process whose transitions depend
on the state of the process. As an application, the wear of cylinder
lines in a heavy-duty marine diesel engines is considered as a state-dependent
Markov process. Also, let us consider a modelling of macroeconomic
and financial time series. In \cite{motivation2}, a regime-switching
model for the sample path of a time series is examined, where the
transition probabilities between the regimes depend on the state variable.
One can find more examples or scenarios of this kind in the literature.

To the best of the authors' knowledge, only a few works have been
carried out on stability and control of SDJLS. In \cite{yin2009hybrid},
a study of hybrid switching diffusion processes, a kind of state-dependent
jump non-linear systems, has been carried out by treating existence,
uniqueness, stability of the solutions etc. In \cite{sworder1973},
the authors considered that the transition rate of the random jump
process depends on both the state variable and the control input in
such a way that both the state variable and the control input affect
the time scale of the random jump process, thus affecting its transition
rate, and obtained a control policy for a given functional using stochastic
maximum principle. A model for planning and maintenance in flexible
manufacturing system is proposed in \cite{boukas1990}, where the
failure rate of a machine depends on the state variable, and computed
an optimal control using dynamic programming. In \cite{filar2001},
a two-time scale model of production plant is considered as a jump
diffusion model where the failure rate of a machine depends on the
state variable, and obtained an optimal control. 

In this article, we consider the state-dependent transition rates
explicitly as: the transition rates vary depending on which set the
state of the system belongs to. This is a reasonable assumption because
the state of the system at any time belong to one of the predefined
sets, and the transition rate can be considered to have different
values across the predefined sets. The major difference of the current
work in this article with the existing literature on RJLS is that
the random jump process does not follow Markov property. Under the
given assumption that the transition rates vary depending on the set
to which the state of the system belongs to, we prove that the times
at which the change of transition rates occur are stopping times and
accordingly we consider a Dynkin's formula with stopping
Utilizing this formalism, we obtained numerically tractable
sufficient conditions for stochastic stability and stabilization in
terms of linear matrix inequalities (LMIs), though for a restricted
class of SDJLS described in section \ref{section:mathematical}.

The rest of the article is organized as follows: section \ref{section:mathematical}
gives the description of a mathematical model of the SDJLS studied
in this article. In section \ref{section:mainresult}, sufficient
conditions for the stochastic stability and stabilization of the SDJLS
are obtained. In section \ref{section:examples}, numerical examples
are given to illustrate the proposed results, and the concluding remarks
are addressed in section \ref{section:conclusions}.

\textit{Notation} : Let $\mathbb{R}^{n}$ be the n-dimensional real
Euclidean space. $A^{T}$ is the transpose of a matrix $A$. $\lambda_{\text{min}}(A)$
represent the minimum eigenvalue of a matrix $A$. Given two matrices
$L$ and $M$, $L\succ M$ (or $L\prec M$) denotes that the matrix
$L-M$ is positive definite (or negative definite). The standard vector
norm in $\mathbb{R}^{n}$ is indicated by $\Vert.\Vert$ the corresponding
induced norm of a matrix $A$ by $\Vert A\Vert$. Let $\mathbb{Q}$
be the set of rational numbers and $\mathbb{N}_{0}$ be the set of
natural numbers including 0. The operator $\cup$ denotes the union
and $\cap$ denotes intersection. Given two sets $A,\, B,$ $A\setminus B$
denotes the set $A\cap B^{c}.$ The empty set is represented by $\phi$.
For any $a,b\in\mathbb{R},$ $a\wedge b$ represents the minimum of
two numbers $a,b$. $I_{A}(x)$is the standard indicator function
which has a value $1$ if $x\in A$, otherwise has a value $0$. The
mathematical expectation of a random variable $X$ is denoted by $\mathbb{E}[X]$.
Let $g(X_{t})$ be an arbitrary functional of a stochastic process
$X_{t}$; denote $\mathbb{E}[g(X_{t})]_{X_{t}=X}$ as the expectation
of the functional $g(X_{t})$ at $X_{t}=X$.

\section{Mathematical Model}\label{section:mathematical}
Consider a SDJLS in a fixed probability space $\left(\Omega,\mathcal{F},Pr\right)$
\begin{eqnarray}
\dot{x}(t) & = & A_{\theta(t)}x(t),\label{sys:system_old}\\
x(0) & = & x_{0},\notag
\end{eqnarray}
where $x(t)\in\mathbb{R}^{n}$ is the state vector, $x_{0}\in\mathbb{R}^{n}$
is the initial state, $A_{\theta(t)}\in\mathbb{R}^{n\times n}$ be
system matrices which depend on $\theta(t)$. Let $\theta(t)\in S:=\{1,2,\cdots,N\}$,
describing the mode of the system at time $t$, be a finite space
continuous time jump process whose transitions depends on the state
variable $x(t)$ as follows, for $i\ne j$ : 
\begin{align}
Pr\{\theta(t+h)=j/\theta(t)=i,x(t)\} & =\begin{cases}
\lambda_{ij}^{1}h+o(h), & \mathrm{if}\, x(t)\in C_{1},\,\\
\lambda_{ij}^{2}h+o(h), & \mathrm{if}\, x(t)\in C_{2},\,\\
 & \vdots\\
\lambda_{ij}^{K}h+o(h), & \mathrm{if}\, x(t)\in C_{K},\,
\end{cases}\label{eq:rt_intro_old}
\end{align}
where $h>0$ and $\lim\limits _{h\rightarrow 0}\frac{o(h)}{h}=0.$
Let $\mathcal{K}\triangleq\{1,2,\cdots K\}.$ We assume that $C_{1},C_{2},\cdots C_{K}\subseteq\mathbb{R}^{n}$,
$C_{1}\cup C_{2}\cup\cdots C_{K}=\mathbb{R}^{^{n}}$ and $C_{i}\cap C_{j}=\phi$
for any $i\ne j\in\mathcal{K}$. For each $m\in\mathcal{K}$, $\lambda_{ij}^{m}$
is the transition rate of $\theta(t)$ from mode $i$ to mode $j$
with $\lambda_{ij}^{m}\ge 0\ $ for $i\ne j$, and $\lambda_{ii}^{m}=-\sum_{j=1,j\ne i}^{N}\lambda_{ij}^{m}$.
It represents the probability per time unit that $\theta(t)$ makes
a transition from mode $i$ to mode $j$. $o(h)$ is little-$o$ notation
defined by $\lim\limits _{h\rightarrow 0}\frac{o(h)}{h}=0.$ A trivial
remark here is that $m\in\mathcal{K}$ in $\lambda_{ij}^m$ as per (\ref{eq:rt_intro_old}) is a
notation followed in this article, but one should not get confused with the actual power of transition rates.

From the assumption $C_{1}\cup C_{2}\cup\cdots C_{K}=\mathbb{R}^{^{n}}$
and $C_{i}\cap C_{j}=\phi$ for any $i\ne j\in\mathcal{K}$, at any
time $t$, $x(t)$ belongs to one of the sets $C_{l}$, $l\in\mathcal{K}$,
accordingly the transition rate of $\theta(t)$ is $\lambda_{ij}^{l}$from
(\ref{eq:rt_intro_old}). Observe that the transition rate of $\theta(t)$
depends on the state variable $x(t)$, hence we call the system (\ref{sys:system_old})
as SDJLS.

We slightly change the notations of the SDJLS (\ref{sys:system_old})
and the mode $\theta(t)$ (\ref{eq:rt_intro_old}) such that the dealing
of the state dependence becomes simpler. For this purpose consider
$\sigma_t\in\mathcal{K}$, which provide the information of state
variable $x(t)$ at each time $t$ as

\begin{align}
\sigma_t & =\begin{cases}
1, & \mathrm{if}\, x(t)\in C_{1},\,\\
2, & \mathrm{if}\, x(t)\in C_{2},\\
\vdots\\
K, & \mathrm{if}\, x(t)\in C_{K}.
\end{cases}\label{eq:sigma_intro}
\end{align}

Let $r(\sigma_t,t)\in S$ (which is equivalent to $\theta(t)$),
denote the mode of the system at time $t$, be a finite space continuous
time jump process whose transitions depends on $\sigma_t$. Implicitly, $r(\sigma_t,t)$
 depends on the state variable $x(t)$ as follows, for $i\ne j$,
\begin{align}
Pr\{r(\sigma_{t+h},t+h)= & j/r\left(\sigma_t,t \right)=i\}=\begin{cases}
\lambda_{ij}^{1}h+o(h), & \mathrm{if}\,\sigma_t=1,\,\\
\lambda_{ij}^{2}h+o(h), & \mathrm{if}\,\sigma_t=2,\,\\
 & \vdots\\
\lambda_{ij}^{K}h+o(h), & \mathrm{if}\,\sigma_t=K,
\end{cases}\label{eq:rt_intro}
\end{align}
where $\lambda_{ij}^{l}$, for $l\in\mathcal{K}$ is defined in (\ref{eq:rt_intro_old})

Accordingly, we can describe the SDJLS (\ref{sys:system_old}) as
\begin{eqnarray}
\dot{x}(t) & = & A_{r(\sigma_t,t)}x(t),\label{sys:system}\\
x(0) & = & x_{0},\notag
\end{eqnarray}
where $A_{r(\sigma_t,t)}\in\mathbb{R}^{n\times n}$ be system matrices
(which are equivalent to $A_{\theta(t)}$) which depend on $r(\sigma_t,t)$.
From now onwards, we analyse the system (\ref{sys:system}) with jupm
process (\ref{eq:rt_intro}), which is equivalent to analysing the
system (\ref{sys:system_old}) with jump process (\ref{eq:rt_intro_old}).
\begin{rem}
\label{remark:existence} One can observe that the overall system
(\ref{sys:system}) is nonlinear due to the presence of jump process
$r(\sigma_t,t)$. The existence and uniqueness of solution to the
system (\ref{sys:system}) follows directly from theorem 2.1 of \cite{yin2009hybrid}.
\end{rem}
\begin{rem}Observe that, conditioning on $r(\sigma_t,t)=i$, $r \left(\sigma_{t+h},t+h\right)$
depends on $x(t)$ for any $h>0$, and from (\ref{sys:system}), which
in turn depends on $r(\sigma_{s},s)$, $s<t$. Hence $r(\sigma_t,t)$
is not a Markov process. However $\left(x(t),r(\sigma_t,t),\sigma_t\right)$
is a joint Markov process. This point is stated and proved in the
following lemma. \end{rem}

\begin{lem}
\label{lemma:joint_Markov_process} $\left(x(t),r(\sigma_t,t),\sigma_t\right)$
is a joint Markov process.\end{lem}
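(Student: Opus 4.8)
The plan is to verify directly that the triple $Z(t) := \left(x(t),\, r(\sigma_t,t),\, \sigma_t\right)$ satisfies the defining property of a Markov process, namely that for every $h>0$ and every bounded measurable test function $f$,
\begin{align}
\mathbb{E}\!\left[f(Z(t+h)) \mid \mathcal{F}_t\right] = \mathbb{E}\!\left[f(Z(t+h)) \mid Z(t)\right] \quad \text{a.s.},\notag
\end{align}
where $\mathcal{F}_t$ denotes the history up to time $t$, i.e.\ the $\sigma$-algebra generated by $\{x(s), r(\sigma_s,s): s\le t\}$. First I would record the structural observation that $\sigma_t = \sum_{l=1}^{K} l\, I_{C_l}(x(t))$ is a deterministic measurable function of $x(t)$; hence the third coordinate carries no information beyond $x(t)$, and it suffices to establish the joint Markov property of $\left(x(t), r(\sigma_t,t)\right)$, with $\sigma_t$ slaved to the first coordinate.

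Next I would analyse the transition mechanism over a short interval $[t,t+h]$ by conditioning on the present value $Z(t)=(x,i,l)$ and on whether the mode process jumps. Using (\ref{eq:rt_intro}), conditional on $r(\sigma_t,t)=i$ and $\sigma_t=l$, the probability that $r$ makes no transition on $[t,t+h]$ is $1+\lambda_{ii}^{l}h+o(h)$ and the probability of a single transition to $j\ne i$ is $\lambda_{ij}^{l}h+o(h)$; in either case $x(t+h)$ is obtained from $x(t)$ through the deterministic flow of $\dot x = A_{r}x$ along the realised mode path, so that $x(t+h)=x+A_i x\,h+o(h)$ to leading order, and $\sigma_{t+h}$ is then fixed by the region $C_l$ containing $x(t+h)$. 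The point I would stress is that each of these quantities — the jump intensities $\lambda_{ij}^{l}$, the flow map, and the region indicator — is a function of the present state $(x,i,l)$ alone and does not reference $\mathcal{F}_t$. Consequently the conditional law of $Z(t+h)$ given $\mathcal{F}_t$ coincides with its conditional law given $Z(t)$.

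For arbitrary (not necessarily infinitesimal) $h$ I would make this rigorous by representing the future trajectory as the unique solution, started from $Z(t)$, of the stochastic system consisting of the flow $\dot x = A_{r}x$ driven by the jump process $r$ whose intensity at each instant $s\ge t$ equals $\lambda_{r(\sigma_s,s)\,j}^{\sigma_s}$, with $\sigma_s$ the index of the region occupied by $x(s)$. Existence and uniqueness of this solution are guaranteed by Remark \ref{remark:existence}; since the driving coefficients depend on $s$ only through the current augmented state, the solution map from the initial datum to the trajectory is the same conditional on $\mathcal{F}_t$ as conditional on $Z(t)$, which yields the Markov property.

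I expect the main obstacle to be precisely the state dependence of the transition rate: the mode $r(\sigma_t,t)$ by itself is \emph{not} Markov, because its future intensity depends on $\sigma_s$ and hence, through (\ref{sys:system}), on the entire mode history. The crux of the argument is therefore to show that once $x$ is adjoined as a state coordinate this dependence becomes a feedback through the \emph{current} augmented state only, so that the non-Markovian rate of $r$ alone is absorbed into the Markovian dynamics of the triple. Handling the instants at which $x(s)$ crosses a boundary between two regions $C_l$, where $\sigma_s$ switches, requires some care, but since such crossings are themselves governed by the deterministic flow issuing from the current state, they do not break the conditional independence of the future from the past given the present.
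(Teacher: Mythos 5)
Your proposal is correct and follows essentially the same route as the paper: the appendix proof likewise reduces the triple to the pair $(x(t),\theta(t))$ (since $\sigma_t$ is determined by $x(t)$) and then derives the Markov property from uniqueness of solutions via the flow identity $\xi_{t}^{\tau,\eta(\tau)}=\xi_{t}^{s,\xi_{s}^{\tau,\eta(\tau)}}$, which is exactly your restart-from-the-present argument. The only difference is presentational: the paper makes the key conditioning step explicit through the substitution $E\left[I_{B}\left(\xi_{t}^{s,\xi_{s}^{\tau,\eta(\tau)}}\right)\mid\mathcal{F}_{s}\right]=E\left[I_{B}\left(\xi_{t}^{s,\eta(s)}\right)\right]_{\eta(s)=\xi_{s}^{\tau,\eta(\tau)}}$, i.e.\ freezing the $\mathcal{F}_{s}$-measurable initial datum against post-$s$ driving randomness independent of $\mathcal{F}_{s}$, which is the step you state informally as the solution map having the same law conditional on $\mathcal{F}_{t}$ as conditional on $Z(t)$.
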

\begin{proof}
Given in the Appendix.
\end{proof}
The solution to (\ref{sys:system}) can be constructed as presented
in the sequel. In that direction, we define first exit times from
the sets $C_{j},$ for $j\in\mathcal{K}$. We use a convention $\text{inf}\ \phi\triangleq\infty$. 
\begin{itemize}
\item Step $0$: Let $x(0)\in C_{i_{0}}$, where $i_{0}\in\mathcal{K}$.
Define $\tau_{0}$ as the first exit time from $C_{i_{0}}$ as 
\[
\tau_{0}=\mathrm{inf}\{t\ge 0:\,\Phi_{i_{0}}(t,0)x(0)\notin C_{i_{0}}\}.
\]

\item Step 1: Let $x(\tau_{0})\in C_{i_{1}}$, where $i_{1}\ne i_{0}$,
$i_{1}\in\mathcal{K}$. Define $\tau_{1}$ as the first exit time
from $C_{i_{1}}$ after $\tau_{0}$ as 
\[
\tau_{1}=\mathrm{inf}\{t\ge\tau_{0}:\,\Phi_{i_{1}}(t,\tau_{0})\Phi_{i_{0}}(\tau_{0},0)x(0)\notin C_{i_{1}}\}.
\]

\item Step 2: Let $x(\tau_{1})\in C_{i_{2}}$, where $i_{2}\ne i_{1}$,
$i_{2}\in\mathcal{K}$. Define $\tau_{2}$ as the first exit time
from $C_{i_{2}}$ after $\tau_{1}$ as 
\begin{align*}
\tau_{2}=\mathrm{inf} & \{t\ge\tau_{1}:\,\Phi_{i_{2}}(t,\tau_{1})\Phi_{i_{1}}(\tau_{1},\tau_{0})\Phi_{i_{0}}(\tau_{0},0)x(0)\notin C_{i_{2}}\}.
\end{align*}

\end{itemize}
In general, at any step $m$, given $\tau_{m-1}$, $i_{m-1}\in\mathcal{K}$
of the previous step $m-1$ which is defined in a similar manner above,
\begin{itemize}
\item Step $m$: Let $x(\tau_{m-1})\in C_{i_{m}}$, where $i_{m}\ne i_{m-1}$,
$i_{m}\in\mathcal{K}$. Define $\tau_{m}$ as the first exit time
from $C_{i_{m}}$ after $\tau_{m-1}$ as 
\begin{align}
\tau_{m}= & \mathrm{inf}\{t\ge\tau_{m-1}:\,\Phi_{i_{m}}(t,\tau_{m-1})\Phi_{i_{m-1}}(\tau_{m-1},\tau_{m-2})\cdots\Phi_{i_{0}}(\tau_{0},0)x(0)\notin C_{i_{m}}\},\label{eq:tau}
\end{align}

\end{itemize}
where the random flows $\Phi_{.}(.,.)$ are defined in the sequel.
We describe, in general, one of the random flows $\Phi_{i_{m}}(t,\tau_{m-1})$
in (\ref{eq:tau}), at step $m$, with $x(\tau_{m-1})\in C_{i_{m}}$,
using which any random flow of (\ref{sys:system}) can be described
in a similar fashion. At step $m$, during the interval $[\tau_{m-1},\tau_{m})$,
with $\tau_{-1}\triangleq 0,$: let $n_{m}\in\mathbb{N}_{0}$ be the
number of regime transitions of $r(\sigma_t,t)$ ; let $\{r_{0}^{m},r_{1}^{m},\cdots r_{n_{m}}^{m}\}\in S$
be the sequence of regimes visited by $r(\sigma_t,t)$; let $\left\{ T_{0}^{m},T_{1}^{m},\cdots T_{n_{m}}^{m}\right\} \in[\tau_{m-1},\tau_{m})$
be the successive sojourn times of $r(\sigma_t,t)$, which are independent
exponentially distributed random variables with parameter $\lambda_{ij}^{i_{m}}$.
Let $S_{n_{m}}^{m}\triangleq\sum_{l=0}^{n_{m}-1}T_{l}^{m}.$ Then
$\Phi_{i_{m}}(t,\tau_{m-1})$ is given by,

\begin{align*}
\Phi_{i_{m}}(t,\tau_{m-1}) & =\begin{cases}
\text{e}^{A_{r_{n_{m}}^{m}}\left(t-S_{n_{m}}^{m}-\tau_{m-1}\right)}\text{e}^{A_{r_{n_{m}-1}^{m}}T_{n_{m}-1}^{m}}\cdots\text{e}^{A_{r_{1}^{m}}T_{1}^{m}}\text{e}^{A_{r_{0}^{m}}T_{0}^{m}}, & \text{if }n_{m}\ge 1,\\
\text{e}^{A_{r_{n_{m}}^{m}}\left(t-\tau_{m-1}\right)}, & \text{if }n_{m}=0.
\end{cases}
\end{align*}

\begin{rem}
\label{rem:sigma_r_reform}Notice that, from step $0$, step $1$,
$\cdots$ step $m$,$\cdots,$ $\sigma_t$ and $r(\sigma_t,t)$
can be described alternatively for $t\ge 0$ by 
\begin{align*}
\sigma_t & =\begin{cases}
i_{0}, & \mathrm{if}\, t\in[0,\tau_{0}),\,\\
i_{1}, & \mathrm{if}\, t\in[\tau_{0},\tau_{1}),\,\\
 & \vdots\\
i_{m}, & \mathrm{if}\, t\in[\tau_{m-1},\tau_{m}),\,\\
 & \vdots
\end{cases}
\end{align*}
and
\begin{align}
Pr\{r(\sigma_{t+h},t+h)=j/r(\sigma_t,t)=i\}=\begin{cases}
\lambda_{ij}^{i_{0}}h+o(h), & \mathrm{if}\, t\in[0,\tau_{0}),\,\\
\lambda_{ij}^{i_{1}}h+o(h), & \mathrm{if}\, t\in[\tau_{0},\tau_{1}),\,\\
 & \vdots\\
\lambda_{ij}^{i_{m}}h+o(h), & \mathrm{if}\, t\in[\tau_{m-1},\tau_{m}),\,\\
 & \vdots
\end{cases}\notag
\end{align}
where $\{i_{0},i_{1},\cdots,i_{m},\cdots\}\in\mathcal{K}$ and $i\ne j$,
$i,j\in S.$
\end{rem}

\begin{rem}
Though the alternative reformulations of $\sigma_t$ and $r(\sigma_t,t)$
in remark \ref{rem:sigma_r_reform} seems not much useful at this
point, but the results of section \ref{section:mainresult} will be
based on these reformulations.
\end{rem}
\begin{rem}\label{remark:gen_mjls}From step 0, given $x(0)\in C_{i_{0}},$
for $i_{0}\in\mathcal{K},$ if $\tau_{0}=\infty,$ then $x(t)\in C_{i_{0}}$
for all $t\ge 0.$ In this case the overall system (\ref{sys:system})
is equivalent to time-homogeneous MJLS with jump process being time-homogeneous
Markov with parameter $\lambda_{ij}^{i_{0}}.$\end{rem}

\noindent We define the stopping time in the sequel and prove that
the first exit times $\tau_{0}$, $\,\tau_{1},\,$ $\tau_{2},\cdots$
are the stopping times.
\begin{defn}
\noindent Let $\left(\Omega,\mathcal{F},\mathcal{G}_{t},Pr\right)$ be a filtered
probability space, then a random variable $\tau:\,\Omega\rightarrow[0,\infty]$
(it may take the value $\infty$) is called a stopping time
if $\{\tau\le t\}\in\mathcal{G}_{t}$ for any $t\ge 0$, i.e; the event
$\{\tau\le t\}$ is $\mathcal{G}_{t}$-measurable, which implies the
event $\{\tau\le t\}$ is completely determined by the knowledge of
$\mathcal{G}_{t}$. \end{defn}

The following lemma shows that the first exit times given above are in fact stopping times.
\begin{lem}
\label{lemma:stopping_time} The first exit times $\tau_{0}$,$\,\tau_{1},\,$$\tau_{2},\cdots$ described in step 0, step 1, step 2, $\cdots$ are stopping times.\end{lem}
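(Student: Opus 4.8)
The plan is to show, for each $m$ and each $t\ge 0$, that the event $\{\tau_m\le t\}$ can be written as a \emph{countable} combination of events that are $\mathcal{G}_t$-measurable, and then to invoke the definition of stopping time directly. Everything rests on two structural facts that I would record first. Although the regime process $r(\sigma_t,t)$ jumps, each jump only changes the velocity in $\dot x=A_{r(\sigma_t,t)}x$ and not the value of $x$ itself; hence the sample paths $t\mapsto x(t)$ are \emph{continuous}, and from the explicit form of the flows $\Phi_{\cdot}(\cdot,\cdot)$ the state $x(t)$ is adapted to $\mathcal{G}_t$. The continuity of $x(\cdot)$, combined with the density of $\mathbb{Q}$ in $[0,\infty)$, is precisely what lets me replace an uncountable infimum by a countable one.

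For the base case $\tau_{0}=\inf\{t\ge 0:\,x(t)\notin C_{i_{0}}\}$, I would read $\tau_0$ as the first hitting time of the complement $C_{i_0}^{c}$ by the continuous adapted process $x(\cdot)$. Writing $\mathrm{dist}(y,F)$ for the Euclidean distance from a point $y$ to a set $F$, I would verify the identity
\[
\{\tau_{0}\le t\}=\bigcap_{k\ge 1}\ \bigcup_{s\in\mathbb{Q}\cap[0,t]}\bigl\{\,\mathrm{dist}\bigl(x(s),C_{i_0}^{c}\bigr)<1/k\,\bigr\}.
\]
The inclusion $\supseteq$ follows by extracting a convergent subsequence of rational times $s_k$ with $\mathrm{dist}(x(s_k),C_{i_0}^{c})<1/k$, whose limit $s^{*}\le t$ satisfies $x(s^{*})\in\overline{C_{i_0}^{c}}$ by continuity of $x$; the inclusion $\subseteq$ follows from the density of $\mathbb{Q}$ together with continuity. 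Since each event $\{\mathrm{dist}(x(s),C_{i_0}^{c})<1/k\}$ lies in $\mathcal{G}_{s}\subseteq\mathcal{G}_{t}$ and only countably many set operations are involved, $\{\tau_0\le t\}\in\mathcal{G}_t$, so $\tau_0$ is a stopping time.

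Next I would argue by induction. Assuming $\tau_{0},\dots,\tau_{m-1}$ are stopping times, I note that $\tau_m\ge\tau_{m-1}$ and $\tau_{m}=\inf\{t\ge\tau_{m-1}:\,x(t)\notin C_{i_{m}}\}$, so $\{\tau_m\le t\}=\{\tau_m\le t\}\cap\{\tau_{m-1}\le t\}$. The post-$\tau_{m-1}$ segment of $x(\cdot)$ is again continuous and adapted, so the first exit from $C_{i_m}$ after the stopping time $\tau_{m-1}$ is once more a hitting time of a continuous adapted process, and I would reuse the rational-time representation above, now intersecting with events of the form $\{s\ge\tau_{m-1}\}$ (which belong to $\mathcal{G}_s$ by the induction hypothesis) to conclude $\{\tau_m\le t\}\in\mathcal{G}_t$. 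Here the joint Markov property established in Lemma \ref{lemma:joint_Markov_process} guarantees that the shifted dynamics have the same structure, so no new trajectory behaviour appears after $\tau_{m-1}$.

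The main obstacle is the topology of the partition $\{C_j\}$: since the $C_j$ cover $\mathbb{R}^{n}$ and are pairwise disjoint, they cannot all be open (nor all closed), so the clean ``first hitting time of a closed set'' argument does not apply verbatim to every $C_{i_m}^{c}$. Indeed, the displayed representation detects the hitting time of the \emph{closure} $\overline{C_{i_0}^{c}}$, which may differ from $\tau_0$ when the exit lands on the boundary between two cells. The delicate point is therefore to make the rational-time representation robust to whether the trajectory exits into the interior or onto the boundary of the neighbouring set; in general this forces one to work with the right-continuous (and, if needed, completed) version of the filtration $\mathcal{G}_t$, so that optional times coincide with stopping times. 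I expect handling this boundary behaviour, rather than the measurability bookkeeping, to be the crux of the argument.
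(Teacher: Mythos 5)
Your proposal is correct in outline and follows the same basic route as the paper --- replace the uncountable infimum defining each exit time by countable set operations over rational times, then check $\mathcal{G}_t$-measurability of each elementary event --- but you execute it with considerably more care than the paper does. The paper's entire proof is the naive identity $\{\tau_{0}\le t\}=\bigcup_{s\in\mathbb{Q}\cap[0,t]}\{\Phi_{i_{0}}(s,0)x(0)\notin C_{i_{0}}\}$, rewritten as a complement of a countable intersection, followed by ``similar arguments'' for $\tau_{1},\tau_{2},\dots$; it never invokes path continuity, the distance function, the closure, or an induction for the later exit times. Your representation $\bigcap_{k}\bigcup_{s\in\mathbb{Q}\cap[0,t]}\{\mathrm{dist}(x(s),C_{i_{0}}^{c})<1/k\}$ is the standard d\'ebut-of-a-closed-set argument and is rigorous whenever $C_{i_{0}}^{c}$ is closed, whereas the paper's identity is in general false in precisely the situation you flag: if $C_{i_{0}}^{c}$ is closed (e.g.\ exiting $C_{1}$ into $C_{2}=\{x:\,x_{1}^{2}+x_{2}^{2}\ge 3\}$ in Example \ref{example:1}), a continuous trajectory can touch $C_{i_{0}}^{c}$ at a single irrational instant and return, so the rational-time union strictly undercounts $\{\tau_{0}\le t\}$; and if $C_{i_{0}}^{c}$ is open, the rational union captures only $\{\tau_{0}<t\}$, i.e.\ an optional time, so right-continuity of the filtration is needed exactly as you say. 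In other words, the ``crux'' you honestly leave open at the end is not a defect of your write-up relative to the paper --- it is a gap in the paper's own proof, which silently assumes the exit time is equivalent to its rational-time approximation regardless of the topology of the cells $C_{j}$ (which, being a disjoint cover of $\mathbb{R}^{n}$, necessarily have mixed open/closed structure). What your approach buys is a fully correct treatment for cells with closed complement, an explicit induction handling the shifted exit times after $\tau_{m-1}$, and a precise identification of where augmentation of $\mathcal{G}_{t}$ (usual conditions, or the d\'ebut theorem for progressively measurable processes) is required; to finish it completely one would carry out the open/closed case distinction cell by cell, which is more than the published proof does.
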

\begin{proof}
Given in the Appendix.
\end{proof}

Based on lemma \ref{lemma:joint_Markov_process} and lemma \ref{lemma:stopping_time}, we provide a Dynkin's formula that will be used in the next section.

\begin{defn} Let $\left(x(t),r(\sigma_t,t),\sigma_t\right)$ be a Markov process
and $\tau_{0}$,$\tau_{1}$,$\tau_{2},....$ are stopping times. Let
$\xi(t)\triangleq \left(x(t),r(\sigma_t,t),\sigma_t\right)$. For any suitable
Lyapunov function $V(\xi(t))$, the Dynkin's formula can described
as \cite{sri}, \cite{karimi},

\begin{align}
\mathbb{E}[V(\xi(t))|\xi(0)]-V(\xi(0)) & =\mathbb{E}\left[\sum_{j=0}^{j^{*}}\int_{t\wedge\tau_{j-1}}^{t\wedge\tau_{j}}\mathcal{L}V(\xi(s))ds|\xi(s)\right]\notag\\
 & =\sum_{j=0}^{j^{*}}\mathbb{E}\left[\int_{t\wedge\tau_{j-1}}^{t\wedge\tau_{j}}\mathcal{L}V(\xi(s))ds|\xi(s)\right],\label{eq:dynkin}
\end{align}
where $\mathcal{L}V(\xi(t))$ is the infinitesimal generator of $V(\xi(t))$.
Here $\tau_{-1}=0$, and $j=0,1,\cdots j^{*}$, where $j^{*}\in[0,\infty]$
and $\tau_{j^{*}}\le\infty$. \end{defn}
In general,   $\mathcal{L}V(\xi(t))$ can be understood as the average time rate of change of the function $V(\xi(t))$ given $\xi(t)$ at time $t$. Also observe that, since $\xi(t)$ is Markov process, for any $t\ge 0$, the expectation terms in (\ref{eq:dynkin}) are
conditioned on $\xi(t)$, instead of the natural filtration of $\xi(t)$ on the interval $[0,t]$.

\section{Main Results\label{section:mainresult} }
In this section, we present sufficient conditions for stochastic stability
and stabilization of the system (\ref{sys:system}). 

\subsection{Stochastic stability}
We begin with a definition of stochastic stability,

\begin{defn} For system (\ref{sys:system}), the equilibrium point
$0$ is \textit{stochastically stable} if, for any $x_{0}\in\mathbb{R}^{n}$
and any $r(\sigma_{0},0)\in S:=\{1,2,\cdots,N\}$ and $\sigma_{0}\in\mathcal{K}:=\{1,2,\cdots K\}$,
\begin{align*}
\mathbb{E}\left[\int_{0}^{\infty}||x(t)||^{2}dt\right]<\infty.
\end{align*}
\end{defn}

\noindent We now provide a sufficient condition for stochastic stability. 
\begin{thm}
\noindent \label{theorem:1} The system (\ref{sys:system}) is stochastically
stable if there exist positive definite matrices $P_{i}\succ 0$, $W_{\kappa i}\succ 0$ for
all $i\in S$ and for all $\kappa\in\mathcal{K}$, satisfying 

\begin{equation}
A_{i}^{T}P_{i}+P_{i}A_{i}+\sum_{j=1}^{N}\lambda_{ij}^{\kappa}P_{j}=-W_{\kappa i}, \label{lmi:1}
\end{equation}
 where $\lambda_{ij}^{\kappa}$ is defined in (\ref{eq:rt_intro_old}).
\end{thm}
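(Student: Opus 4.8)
The plan is to use the stochastic Lyapunov function
\[
V(\xi(t)) = x(t)^{T} P_{r(\sigma_t,t)}\, x(t),
\]
which is nonnegative because each $P_{i}\succ 0$, and to show by means of Dynkin's formula (\ref{eq:dynkin}) that the equality (\ref{lmi:1}) forces its infinitesimal generator to be strictly negative, uniformly in the mode $i\in S$ and in the active set $C_{\kappa}$. First I would compute $\mathcal{L}V$. On any interval $[\tau_{j-1},\tau_{j})$ the index $\sigma_{s}$ is constant (Remark \ref{rem:sigma_r_reform}), equal to some $\kappa\in\mathcal{K}$, so there the pair $(x(s),r(\sigma_s,s))$ evolves as an ordinary Markov jump linear system with the fixed rates $\lambda_{ij}^{\kappa}$. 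Its generator therefore splits into a drift part from $\dot{x}=A_{r}x$ and a jump part from the transitions of $r$, and using $\lambda_{ii}^{\kappa}=-\sum_{j\ne i}\lambda_{ij}^{\kappa}$ to absorb the $-V$ term of the jump operator gives
\[
\mathcal{L}V(\xi(s)) = x(s)^{T}\!\left(A_{i}^{T}P_{i}+P_{i}A_{i}\right)x(s) + \sum_{j=1}^{N}\lambda_{ij}^{\kappa}\, x(s)^{T}P_{j}\,x(s),
\]
with $i=r(\sigma_s,s)$ and $\kappa=\sigma_s$. By (\ref{lmi:1}) the bracketed matrix equals $-W_{\kappa i}$, so that $\mathcal{L}V(\xi(s)) = -\,x(s)^{T}W_{\sigma_s,\, r(\sigma_s,s)}\, x(s)$.

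Next I would insert this into Dynkin's formula. Summing the integrals over the successive intervals $[t\wedge\tau_{j-1},t\wedge\tau_{j})$ in (\ref{eq:dynkin}) reassembles the single integral $\int_{0}^{t}\mathcal{L}V(\xi(s))\,ds$, whence
\[
\mathbb{E}[V(\xi(t))\mid\xi(0)] - V(\xi(0)) = -\,\mathbb{E}\!\left[\int_{0}^{t} x(s)^{T}W_{\sigma_s,\, r(\sigma_s,s)}\, x(s)\,ds \;\Big|\;\xi(0)\right].
\]
Setting $\beta = \min_{\kappa\in\mathcal{K},\, i\in S}\lambda_{\text{min}}(W_{\kappa i})$, and noting that $S$ and $\mathcal{K}$ are finite and each $W_{\kappa i}\succ 0$, we get $\beta>0$ and the uniform bound $x^{T}W_{\kappa i}x\ge\beta\Vert x\Vert^{2}$. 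Together with $V(\xi(t))\ge 0$ this yields
\[
\beta\,\mathbb{E}\!\left[\int_{0}^{t}\Vert x(s)\Vert^{2}\,ds \;\Big|\;\xi(0)\right] \le V(\xi(0)) - \mathbb{E}[V(\xi(t))\mid\xi(0)] \le V(\xi(0)) = x_{0}^{T}P_{r(\sigma_0,0)}\,x_{0}.
\]

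Finally, since the integrand is nonnegative and the right-hand bound does not depend on $t$, letting $t\to\infty$ (monotone convergence) gives $\mathbb{E}\big[\int_{0}^{\infty}\Vert x(s)\Vert^{2}ds\mid\xi(0)\big] \le \beta^{-1} x_{0}^{T}P_{r(\sigma_0,0)}x_{0}<\infty$, which is exactly the definition of stochastic stability. I expect the only delicate point to be the computation of the generator: one must argue that, because $\sigma_{s}$ is piecewise constant on the intervals delimited by the stopping times $\tau_{j}$, the state-dependence of the rates contributes no extra generator terms, and that $V$ (which depends on $\xi$ only through $x$ and $r$, not explicitly on $\sigma$) is an admissible test function for the joint Markov process of Lemma \ref{lemma:joint_Markov_process}. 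Here Lemma \ref{lemma:stopping_time} does the essential work, since it is what licenses the stopping-time form of Dynkin's formula on which the piecewise-constant-rate bookkeeping relies; once $\mathcal{L}V$ is identified with $-x^{T}W_{\kappa i}x$, the remaining estimates are routine.
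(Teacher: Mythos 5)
Your proposal is correct and follows essentially the same route as the paper's own proof: the same mode-dependent quadratic Lyapunov function $V=x^{T}P_{r(\sigma_t,t)}x$, the same identification of $\mathcal{L}V=-x^{T}W_{\kappa i}x$ via (\ref{lmi:1}) on the intervals where $\sigma_s$ is constant, the same stopping-time Dynkin's formula (licensed by Lemmas \ref{lemma:joint_Markov_process} and \ref{lemma:stopping_time}) with the piecewise integrals reassembled into $\int_{0}^{t}$, and the same uniform bound $\beta=\min_{\kappa\in\mathcal{K},i\in S}\lambda_{\text{min}}(W_{\kappa i})>0$ followed by letting $t\to\infty$. If anything, you spell out two points the paper leaves implicit (the frozen-rate MJLS generator computation on each interval and the monotone convergence step), so no gap remains.
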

\noindent \textit{Proof}: Consider a $V\left(x(t),r(\sigma_t,t),\sigma_t\right)=x^{T}(t)P_{r(\sigma_t,t)}x(t)$,
which is quadratic and positive in $x(t)$, hence a Lyapunov candiate
function. Let the infinitesimal generator of $V\left(x(t),r(\sigma_t,t),\sigma_t\right)$, for any $i\in S$ and for any $\kappa \in \mathcal{K}$, be given by,
\begin{align*}
\mathcal{L}V\big( & x(t),r(\sigma_t=\kappa,t)=i,\sigma_t=\kappa\big)\\
 & =\lim_{h\rightarrow 0}\frac{1}{h}\bigg\{\mathbb{E}\Big[V\big(x(t+h),r(\sigma_{t+h},t+h),\sigma_{t+h}\big)|\big(x(t),r(\sigma_t=\kappa,t)=i,\sigma_t=\kappa\big)\Big]\\
 & \;\qquad\;-V\big(x(t),r(\sigma_t=\kappa,t)=i,\sigma_t=\kappa\big)\bigg\}\\
 & =x^{T}(t)\Big[A_{i}^{T}P_{i}+P_{i}A_{i}+\sum_{j=1}^{N}\lambda_{ij}^{\kappa}P_{j}\Big]x(t).
\end{align*}
The above derivation is quite straight forward and follows the similar
approach as given in \cite{boukasbook}, \cite{costabook2012} for
example. Hence, by (\ref{lmi:1}),
\begin{align}
\hspace{-2.3cm}\mathcal{L}V\big(x(t),r(\sigma_t=\kappa,t)=i,\sigma_t=\kappa\big) & =-x^{T}(t)W_{\kappa i}x(t) \notag\\
 & \le-\min_{\kappa\in\mathcal{K},i\in S}\{\lambda_{\text{min}}(W_{\kappa i})\}x^{T}(t)x(t).\label{eq:lvmin}
\end{align}
From (\ref{eq:dynkin}), consider for any $i_{0}\in\mathcal{K},$
\begin{align*}
\mathbb{E} & \Big[V\big(x(t),r(\sigma_t,t),\sigma_t\big)|\big(x(0),r(\sigma_{0}=i_{0},0),\sigma_{0}=i_{0}\big)\Big]-V\big(x(0),r(\sigma_{0}=i_{0},0),\sigma_{0}=i_{0}\big)\\
 & =\sum_{j=0}^{j^{*}}\mathbb{E}\left[\int_{t\wedge\tau_{j-1}}^{t\wedge\tau_{j}}\mathcal{L}V\big(x(s),r(\sigma_{s},s),\sigma_{s}\big)ds|(x(s),r(\sigma_{s},s),\sigma_{s})\right],
\end{align*}
 where $j^{*},$ $\tau_{-1}$ and $\tau_{j^{*}}$ are given
in (\ref{eq:dynkin}). Let $\{i_{0},i_{1},i_{2},\cdots\}\in\mathcal{K}$
be the successive states visited by $\sigma_t$ similar to remark
\ref{rem:sigma_r_reform}. Then 
\begin{align*}
\mathbb{E}  \Big[V\big(x(t),& r(\sigma_t,t),\sigma_t\big)|\big(x(0),r(\sigma_{0}=i_{0},0),\sigma_{0}=i_{0}\big)\Big]-V\big(x(0),r(\sigma_{0}=i_{0},0),\sigma_{0}=i_{0}\big)\\
= & \mathbb{E}\left[\int_{0}^{\tau_{0}}\mathcal{L}V\big(x(s),r(\sigma_{s}=i_{0},s),\sigma_{s}=i_{0}\big)ds|\big(x(s),r(\sigma_{s}=i_{0},s),\sigma_{s}=i_{0}\big)\right]\\
 & +\mathbb{E}\left[\int_{\tau_{0}}^{\tau_{1}}\mathcal{L}V\big(x(s),r(\sigma_{s}=i_{1},s),\sigma_{s}=i_{1}\big)ds|\big(x(s),r(\sigma_{s}=i_{1},s),\sigma_{s}=i_{1}\big)\right]\\
 & +\cdots+\mathbb{E}\left[\int_{t\wedge\tau_{j^{*}-1}}^{t\wedge\tau_{j^{*}}}\mathcal{L}V\big(x(s),r(\sigma_{s},s),\sigma_{s}\big)ds|(x(s),r(\sigma_{s},s),\sigma_{s})\right].
\end{align*}
By (\ref{eq:lvmin}), 
\begin{align*}
\mathbb{E} & \Big[V\big(x(t),r(\sigma_t,t),\sigma_t\big)|\big(x(0),r(\sigma_{0}=i_{0},0),\sigma_{0}=i_{0}\big)\Big]-V\big(x(0),r(\sigma_{0}=i_{0},0),\sigma_{0}=i_{0}\big)\\
 & \hspace{1.5cm}\le-\min_{\kappa\in\mathcal{K},i\in S}\{\lambda_{\text{min}}(W_{\kappa i})\}\bigg(\mathbb{E}\left[\int_{0}^{\tau_{0}}x^{T}(s)x(s)ds\right]+\mathbb{E}\left[\int_{\tau_{0}}^{\tau_{1}}x^{T}(s)x(s)ds\right]+\\
 &\hspace{2cm} \cdots+\mathbb{E}\left[\int_{t\wedge\tau_{j^{*}-1}}^{t\wedge\tau_{j^{*}}}x^{T}(s)x(s)ds\right]\bigg).
\end{align*}
By denoting $\sum_{j=0}^{j^{*}}\int_{t\wedge\tau_{j-1}}^{t\wedge\tau_{j}}=\int_{0}^{t}$,
one obtains, 
\begin{align*}
\mathbb{E}\Big[V\big(x(t),r(\sigma_t, & t),\sigma_t\big)|\big(x(0),r(\sigma_{0}=i_{0},0),\sigma_{0}=i_{0}\big)\Big]-V\big(x(0),r(\sigma_{0}=i_{0},0),\sigma_{0}=i_{0}\big)\\
\le & -\min_{\kappa\in\mathcal{K},i\in S}\{\lambda_{\text{min}}(W_{\kappa i})\}\mathbb{E}\left[\int_{0}^{t}x^{T}(s)x(s)ds\right].
\end{align*}
By rearranging the terms, 
\begin{align*}
\min_{\kappa\in\mathcal{K},i\in S}&\{\lambda_{\text{min}}(W_{\kappa i})\}\mathbb{E}\left[\int_{0}^{t}x^{T}(s)x(s)ds\right] \\
  &\le V\big(x(0),r(\sigma_{0}=i_{0},0),\sigma_{0}=i_{0}\big)\\
  &\hspace{2cm} -\mathbb{E}\Big[V\big(x(t),r(\sigma_t,t),\sigma_t\big)| \big(x(0),r(\sigma_{0}=i_{0},0),\sigma_{0}=i_{0}\big)\Big]\\
 & \le V\big(x(0),r(\sigma_{0}=i_{0},0),\sigma_{0}=i_{0}\big).
\end{align*}

\noindent Thus, 
\begin{align*}
\mathbb{E}\left[\int_{0}^{t}x^{T}(s)x(s)ds\right]\le\frac{V\big(x(0),r(\sigma_{0}=i_{0},0),\sigma_{0}=i_{0}\big)}{\min_{\kappa\in\mathcal{K},i\in S}\{\lambda_{\text{min}}(W_{\kappa i})\}}.
\end{align*}
By letting $t\rightarrow\infty$, 
\begin{align*}
\mathbb{E}\left[\int_{0}^{\infty}x^{T}(s)x(s)ds\right]\le\frac{V\big(x(0),r(\sigma_{0}=i_{0},0),\sigma_{0}=i_{0}\big)}{\min_{\kappa\in\mathcal{K},i\in S}\{\lambda_{\text{min}}(W_{\kappa i})\}}<\infty.
\end{align*}
Thus the system (\ref{sys:system}) is stochastically stable.\hfill{}
$\square$

\begin{rem}Similar to remark \ref{remark:gen_mjls}, from step 0,
given $x(0)\in C_{i_{0}},$ for $i_{0}\in\mathcal{K},$ if $\tau_{0}=\infty,$
then the overall system (\ref{sys:system}) is equivalent to time-homogenous
MJLS. And LMIs (\ref{lmi:1}) are equivalent to the LMIs given in
theorem 1 of \cite{boukasbook} with the transition rate $\lambda_{ij}^{i_{0}}.$\end{rem}

\subsection{Stochastic stabilization}

In this section we provide a sufficient condition for stochastic stabilization
with state feedback controller. Consider the system (\ref{sys:system})
with control input $u(t)\in\mathbb{R}^{m}$ 
\begin{equation}
\dot{x}(t)=A_{r(\sigma_t,t)}x(t)+B_{r(\sigma_t,t)}u(t),\label{sys:system_withipold}
\end{equation}
where $B_{r(\sigma_t,t)}\in\mathbb{R}^{n\times m}$.

We make use of the theorem \ref{theorem:1} to design a state-feedback
stabilizing controller such that the system (\ref{sys:system_withipold})
is stochastically stable. We assume that the system mode $r(\sigma_t,t)$
is available in real time, which lead to a state feedback control
law of the form 
\begin{equation}
u(t)=K_{r(\sigma_t,t)}x(t),\quad K_{r(\sigma_t,t)}\in\mathbb{R}^{m\times n}\label{eq:stateFB}
\end{equation}
then the overall system is given by

\begin{equation}
\dot{x}(t)=\tilde{A}_{r(\sigma_t,t)}x(t),\label{sys:system_withip}
\end{equation}
where $\tilde{A}_{r(\sigma_t,t)}=A_{r(\sigma_t,t)}+B_{r(\sigma_t,t)}K_{r(\sigma_t,t)}.$

\begin{rem}Since the system (\ref{sys:system_withip}) is identical
to the system (\ref{sys:system}), the existence and uniqueness of the solutions
of (\ref{sys:system_withip}) follow from remark \ref{remark:existence}.\end{rem}

The following theorem provides a sufficient condition for the existence
of a stabilizing controller of the form (\ref{eq:stateFB}).

\begin{thm}\label{theorem:stabilization_SS} Consider the system
(\ref{sys:system_withipold}) with $\sigma_t$ and $r(\sigma_t,t)$
described in (\ref{eq:sigma_intro}) and (\ref{eq:rt_intro}). If
there exist matrices $X_{i}\succ0$, and $Y_{i}$, for each $i\in S$
such that, 
\begin{equation}
\begin{bmatrix}J_{i} & \mathcal{M}_{\kappa i}\\
\ast & -\mathcal{X}_{i}
\end{bmatrix}\prec0,\label{lmi:stabilization}
\end{equation}
for each $\kappa\in\mathcal{K}$,  where 
\begin{align*}
J_{i}= & X_{i}A_{i}^{T}+Y_{i}^{T}B_{i}^{T}+A_{i}X_{i}+B_{i}Y_{i}+\left(\sum_{j=1}^{K}\lambda_{ii}^{j}I_{\{\kappa=j\}}\right)X_{i},\\
\mathcal{M}_{\kappa i}= & [\sqrt{\lambda_{i1}^{\kappa}}X_{i}\,\cdots\sqrt{\lambda_{ii-1}^{\kappa}}X_{i}\,\sqrt{\lambda_{ii+1}^{\kappa}}X_{i}\,\cdots\sqrt{\lambda_{iN}^{\kappa}}X_{i}],\\
\mathcal{X}_{i}= & \mathrm{diag}\{X_{1}\cdots X_{i-1},X_{i+1}\cdots X_{N}\},
\end{align*}
then the system (\ref{sys:system_withipold}) is stochastically stabilized
by (\ref{eq:stateFB}), and the stabilizing controller is given by
\begin{equation}
K_{i}=Y_{i}X_{i}^{-1}.\label{controller:stateFB}
\end{equation}
\end{thm}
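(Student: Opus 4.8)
The plan is to reduce this to Theorem~\ref{theorem:1} applied to the closed-loop system (\ref{sys:system_withip}). Since (\ref{sys:system_withip}) has exactly the form of (\ref{sys:system}) with $A_i$ replaced by $\tilde{A}_i = A_i + B_i K_i$, Theorem~\ref{theorem:1} guarantees stochastic stability as soon as there exist $P_i \succ 0$ with
\begin{align}
\tilde{A}_i^T P_i + P_i \tilde{A}_i + \sum_{j=1}^N \lambda_{ij}^\kappa P_j \prec 0 \label{eq:closedloop}
\end{align}
for every $i \in S$ and every $\kappa \in \mathcal{K}$; indeed, writing the left-hand side as $-W_{\kappa i}$ with $W_{\kappa i} \succ 0$ is equivalent to the strict inequality. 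The entire task is therefore to convert the matrix inequality (\ref{eq:closedloop}), which is nonlinear in the unknowns $P_i$ and $K_i$, into the LMI (\ref{lmi:stabilization}).

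First I would perform the standard congruence transformation: set $X_i = P_i^{-1} \succ 0$ and pre- and post-multiply (\ref{eq:closedloop}) by $X_i$. Substituting $\tilde{A}_i = A_i + B_i K_i$ and introducing the change of variable $Y_i = K_i X_i$ (so that the controller is recovered as $K_i = Y_i X_i^{-1}$, exactly (\ref{controller:stateFB})), the terms $X_i \tilde{A}_i^T + \tilde{A}_i X_i$ become $X_i A_i^T + Y_i^T B_i^T + A_i X_i + B_i Y_i$, which is affine in $X_i, Y_i$. The remaining summation transforms into $\sum_{j=1}^N \lambda_{ij}^\kappa X_i P_j X_i = \sum_{j=1}^N \lambda_{ij}^\kappa X_i X_j^{-1} X_i$.

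Next I would isolate the $j=i$ term, for which $X_i X_i^{-1} X_i = X_i$, yielding the contribution $\lambda_{ii}^\kappa X_i$; collecting this with the affine part gives precisely the block $J_i$ (the coefficient $\sum_{j=1}^K \lambda_{ii}^j I_{\{\kappa=j\}}$ simply selects $\lambda_{ii}^\kappa$). The off-diagonal residue $\sum_{j \ne i} \lambda_{ij}^\kappa X_i X_j^{-1} X_i$ is where the nonlinearity survives, and handling it is the main obstacle. The key observation is that it factors as $\mathcal{M}_{\kappa i} \mathcal{X}_i^{-1} \mathcal{M}_{\kappa i}^T$, with $\mathcal{M}_{\kappa i}$ and $\mathcal{X}_i$ exactly as defined in the statement; because each $\lambda_{ij}^\kappa \ge 0$ for $i \ne j$, the square roots $\sqrt{\lambda_{ij}^\kappa}$ are real, so this term is well defined and positive semidefinite.

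Finally, since $\mathcal{X}_i = \mathrm{diag}\{X_1, \ldots, X_{i-1}, X_{i+1}, \ldots, X_N\} \succ 0$, applying the Schur complement to $J_i + \mathcal{M}_{\kappa i} \mathcal{X}_i^{-1} \mathcal{M}_{\kappa i}^T \prec 0$ yields the equivalent block inequality (\ref{lmi:stabilization}). Thus feasibility of (\ref{lmi:stabilization}) for all $\kappa \in \mathcal{K}$ and $i \in S$ delivers $P_i = X_i^{-1} \succ 0$ and $K_i = Y_i X_i^{-1}$ satisfying (\ref{eq:closedloop}), whence the closed-loop system is stochastically stable by Theorem~\ref{theorem:1}. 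I expect every step except the factorization of the cross term to be routine linear-algebra bookkeeping; the one point warranting care is verifying that the indexing of $\mathcal{M}_{\kappa i}$ and the deletion of the $i$-th block in $\mathcal{X}_i$ match the omission of $j=i$ in the off-diagonal sum.
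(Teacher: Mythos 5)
Your proposal is correct and is essentially the paper's own argument: the paper proves this theorem by declaring it ``an immediate extension of theorem 11 of \cite{boukasbook}'', and the standard argument behind that citation is precisely what you wrote out --- apply Theorem \ref{theorem:1} to the closed loop, congruence by $X_i = P_i^{-1}$, the change of variables $Y_i = K_i X_i$, and a Schur complement on the cross term $\sum_{j\ne i}\lambda_{ij}^{\kappa}X_i X_j^{-1} X_i = \mathcal{M}_{\kappa i}\mathcal{X}_i^{-1}\mathcal{M}_{\kappa i}^{T}$, imposed for each $\kappa\in\mathcal{K}$. You in fact supply the routine details the paper omits, including the indexing check on $\mathcal{M}_{\kappa i}$ and $\mathcal{X}_i$, and these all hold.
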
 
\begin{proof}
The proof is an immediate extension of theorem 11 of \cite{boukasbook}.
\end{proof}

\section{Illustrative Examples}\label{section:examples} 
In this section, two numerical examples
are presented to illustrate the proposed results.\\
\begin{example}\label{example:1}

\noindent Consider the SDJLS (\ref{sys:system_old})
with $x(t)\triangleq[x_{1}(t),x_{2}(t)]^{T}\in\mathbb{R}^{2}$. Let
$\theta(t)\in S:=\{1,2\}$, be the state-dependent jump process given
by (\ref{eq:rt_intro_old}), with $C_{1}\triangleq\{x(t)\in\mathbb{R}^{2}:x_{1}^{2}(t)+x_{2}^{2}(t)<3\}$,
$C_{2}\triangleq\{x(t)\in\mathbb{R}^{2}:x_{1}^{2}(t)+x_{2}^{2}(t)\ge3\}$,
and for $\mathcal{K}:=\{1,2\}$ the transition rate matrices of $\theta(t)$
are given by

\begin{align*}
\left(\lambda_{ij}^{1}\right)_{2\times2}= \begin{bmatrix}-2 & 2\\
2 & -2
\end{bmatrix}, & \left(\lambda_{ij}^{2}\right)_{2\times2}= \begin{bmatrix}-4 & 4\\
4 & -4
\end{bmatrix}.
\end{align*}
Let 
\begin{align*}
A_{1}=\begin{bmatrix}-1 & 5\\
-0.5 & 0.9
\end{bmatrix},\, & A_{2}=\begin{bmatrix}-4 & 2\\
-2 & 0.1
\end{bmatrix}.
\end{align*}
Then, from theorem \ref{theorem:1}, the LMIs (\ref{lmi:1}), are
satisfied with
\begin{align*}
P_{1}=\begin{bmatrix}0.3787 & -0.4069\\
-0.4069 & 2.2977
\end{bmatrix},\, & P_{2}=\begin{bmatrix}0.3891 & -0.6203\\
-0.6203 & 1.9226
\end{bmatrix}.
\end{align*} 
Hence, the system (\ref{sys:system_old}) is stochastically stable.
With $\theta(0)=1$, $x(0)=[-1,1]^{T}$, a sample $\theta(t)$ with
the corresponding stopping times $\tau_{0}$, $\tau_{1}$, $\cdots$
are given in step 0, step 1, $\cdots$are plotted in figure \ref{fig:rt};
the corresponding sample state trajectories of the system are shown
in figure \ref{fig:states}.

%\begin{figure}[h]
%\centering \includegraphics[width=7cm,height=5cm]{images/modes}
%\caption{A sample $\theta(t)$ and the corresponding stopping times }
%\label{fig:rt} 
%\end{figure}
%
%\begin{figure}[h]
%\centering \includegraphics[width=7cm,height=5cm]{images/states}
%\caption{State trajectories of the system: a single sample path }
%\label{fig:states} 
%\end{figure}

\begin{figure}[h]
        \centering
        \begin{subfigure}[b]{0.45\textwidth}
                \includegraphics[width=\textwidth,height=5cm]{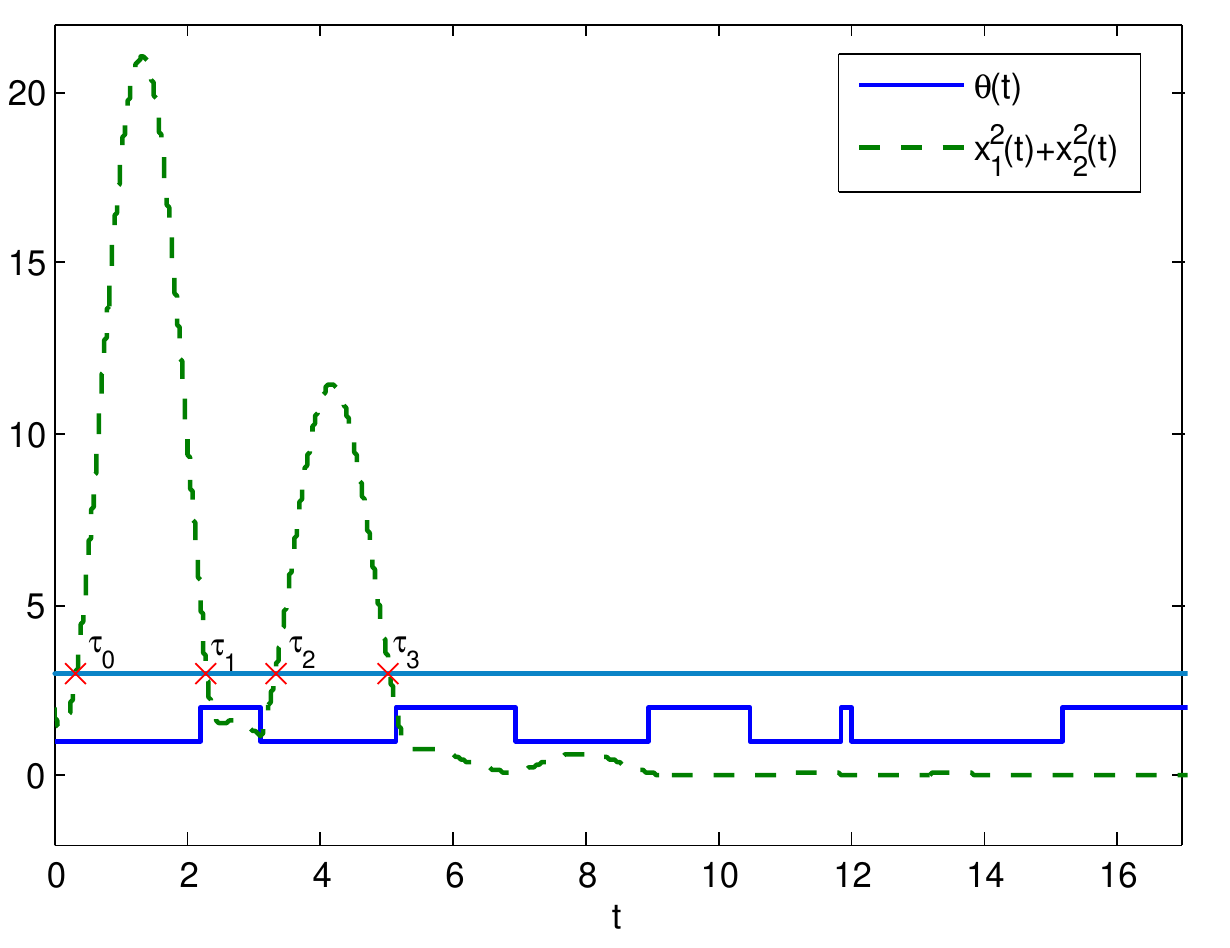}
				\caption{A sample $\theta(t)$ and the corresponding stopping times }
               \label{fig:rt} 
        \end{subfigure}%
        ~ %add desired spacing between images, e. g. ~, \quad, \qquad, \hfill etc.
          %(or a blank line to force the subfigure onto a new line)
        \begin{subfigure}[b]{0.45\textwidth}
                \includegraphics[width=\textwidth,height=5cm]{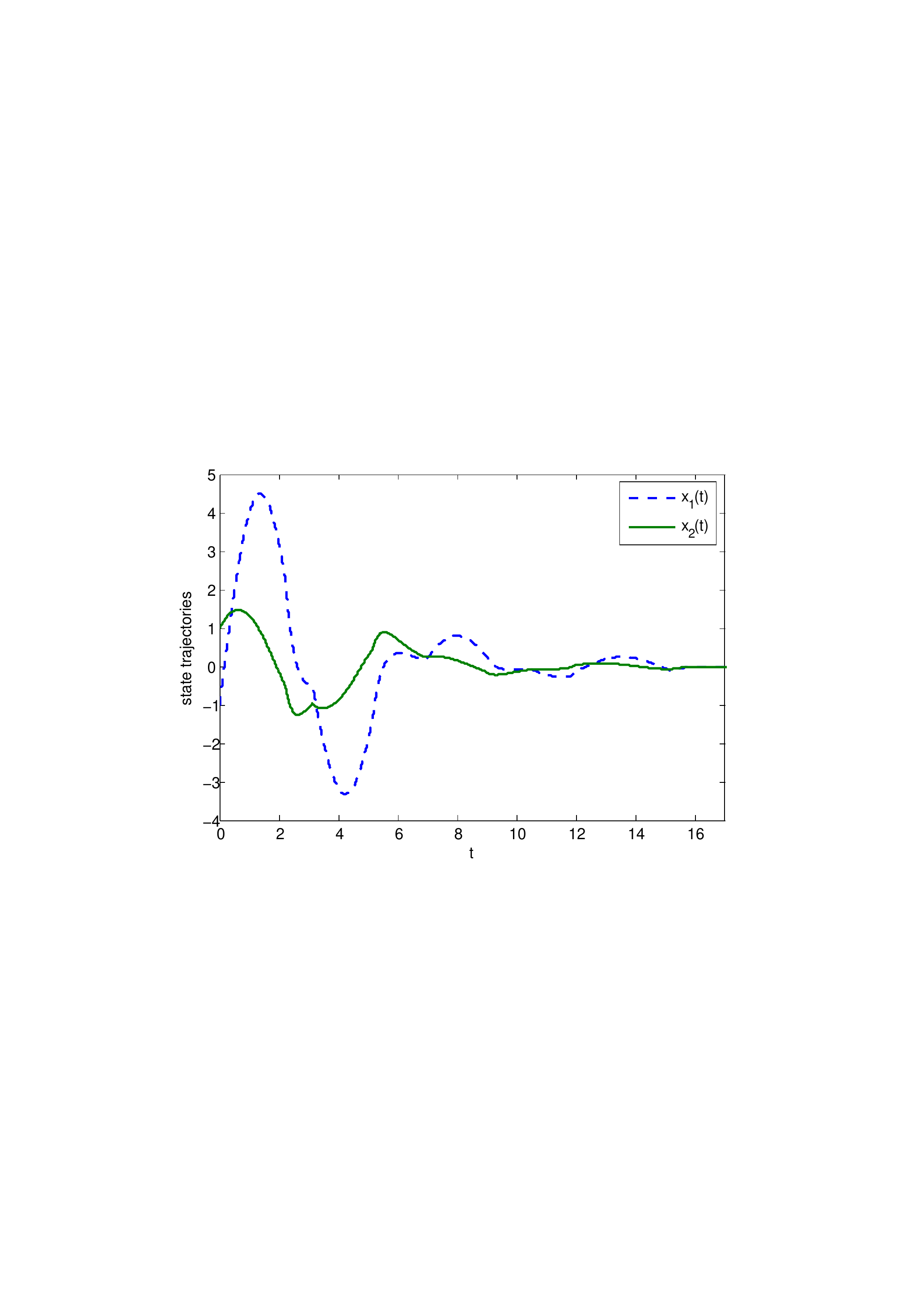}
				\caption{State trajectories of the system: a single sample path }
                \label{fig:states}
        \end{subfigure}
         %add desired spacing between images, e. g. ~, \quad, \qquad, \hfill etc.
          %(or a blank line to force the subfigure onto a new line)
             \caption{Plots of example 1 }\label{fig:example1}
\end{figure}
\end{example}

\begin{example} \label{example:2}
 In this example we utilize the result of theorem
\ref{theorem:stabilization_SS} to synthesize a state feedback controller
of form (\ref{eq:stateFB}) that stochastically stabilizes (\ref{sys:system_withipold}).
Let $\theta(t)\in S:=\{1,2\}$ consists of the same parameters as
example \ref{example:1}. Let
\begin{align*}
A_{1}=\begin{bmatrix}-1 & 2\\
-2 & 1
\end{bmatrix}, A_{2}=\begin{bmatrix}1 & 2\\
2 & 1
\end{bmatrix}, B_{1}=\begin{bmatrix}1 \\
3
\end{bmatrix}, B_{2}=\begin{bmatrix}-5 \\
6
\end{bmatrix}.
\end{align*} 
By solving the LMIs (\ref{lmi:stabilization}),
we obtain:
\begin{align*}
X_{1} & =\begin{bmatrix}0.343 & -0.365\\
-0.365 & 0.3973
\end{bmatrix}, X_{2}  =\begin{bmatrix}0.3998 & -0.4203\\
-0.4203 & 0.4462
\end{bmatrix},\\
Y_{1} & =\begin{bmatrix}0.2597 & -0.5748\end{bmatrix},\hspace{0.11cm} Y_{2}  =\begin{bmatrix}-0.0385 & 0.0032\end{bmatrix},
\end{align*}
and the feedback gains are obtained as: 
\begin{align*}
K_{1}=\begin{bmatrix}-35.4961-34.0615\end{bmatrix},\quad K_{2}=\begin{bmatrix}-8.9022 & -8.3779\end{bmatrix}.
\end{align*} 
 With $\theta(0)=1$, $x(0)=[-1,1]^{T}$, a single sample path simulation corresponding
to a realization of $\theta(t)$ is given in figure \ref{fig:rt_fb};
a sample state trajectories of the closed loop system resulting from
the obtained controller are shown in figure \ref{fig:states_fb}.
It can be observed that the closed loop system is stochastically stable.

%\begin{figure}[h]
%\centering \includegraphics[width=7cm,height=5cm]{images/stabilization_modes}
%\caption{A sample $\theta(t)$ }
%
%
%\label{fig:rt_fb} 
%\end{figure}

%\begin{figure}[h]
%\centering \includegraphics[width=7cm,height=5cm]{images/stabilization_states}
%\caption{State trajectories of the closed loop system: a single sample path }
%
%
%\label{fig:states_fb} 
%\end{figure}

\begin{figure}[h]
\centering
                \begin{subfigure}[b]{0.45\columnwidth}
                \includegraphics[width=\columnwidth,height=5cm]{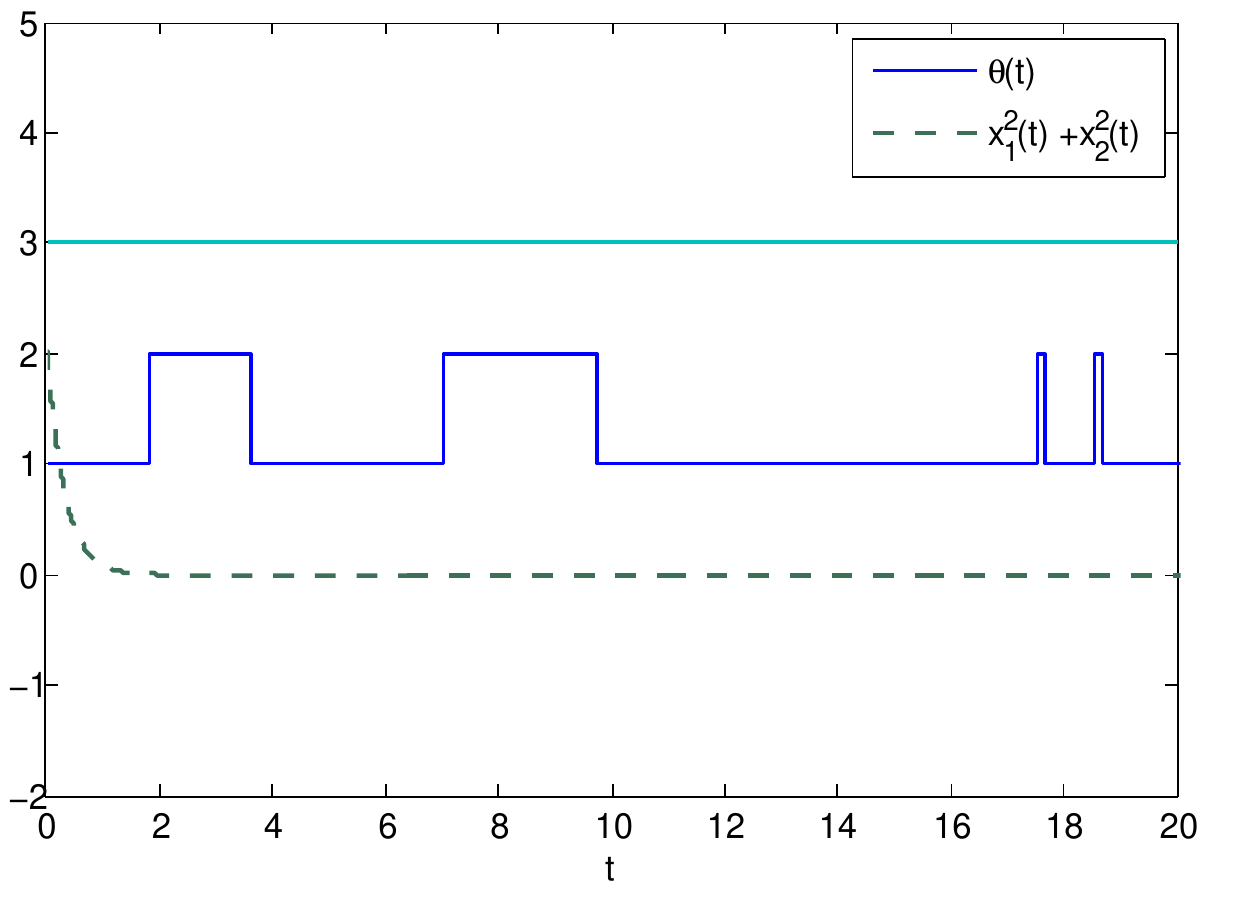}
	\caption{A sample $\theta(t)$ \textcolor{white}{i ttttt ttt tt err rrr rrr rr tt tt tt tt r}	
			}
               \label{fig:rt_fb} 
        \end{subfigure}%
        ~ %add desired spacing between images, e. g. ~, \quad, \qquad, \hfill etc.
          %(or a blank line to force the subfigure onto a new line)
       \begin{subfigure}[b]{0.45\columnwidth}
                \includegraphics[width=\columnwidth,height=5cm]{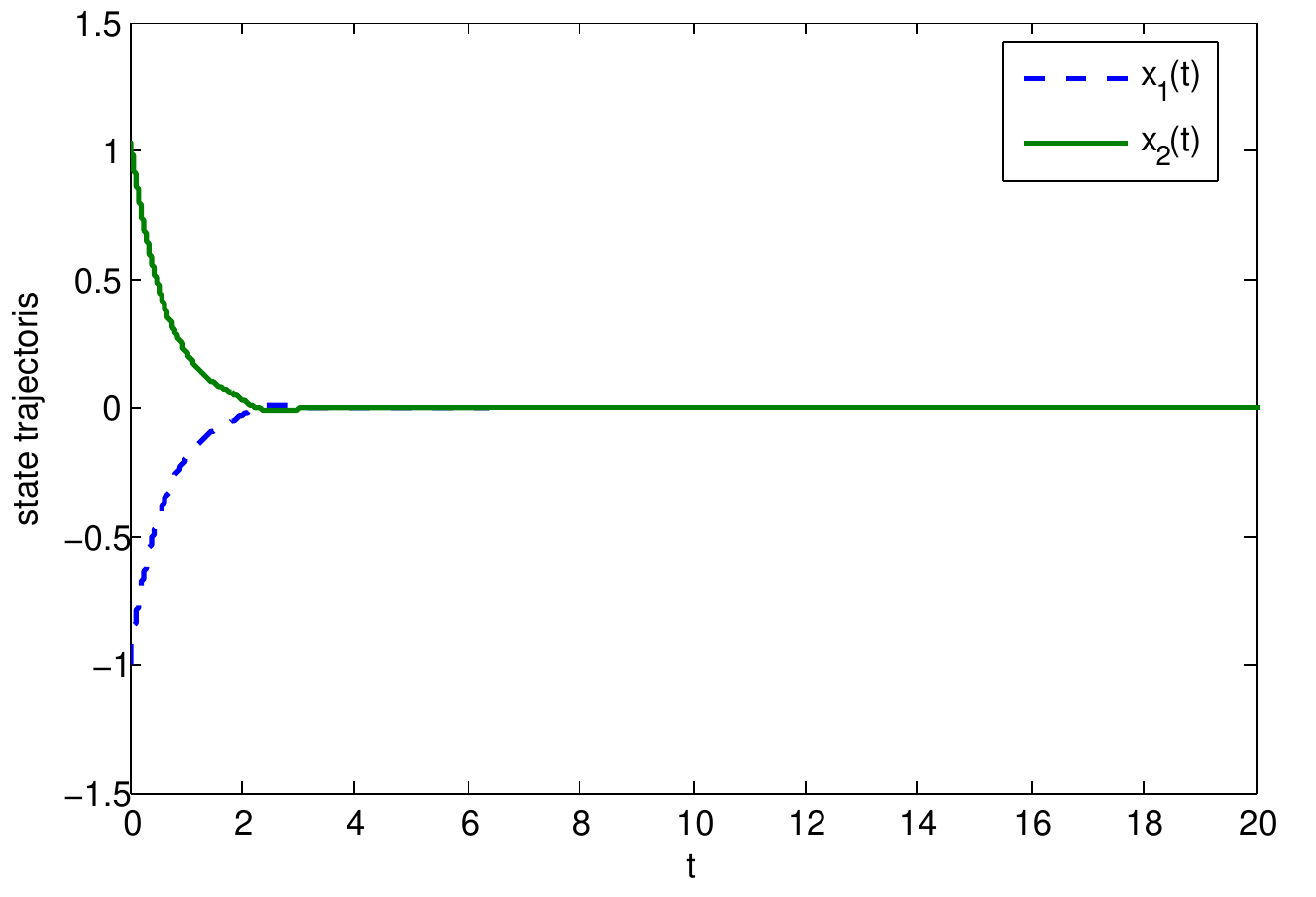}
	\caption{State trajectories of the closed loop system: a single sample path }
                \label{fig:states_fb}
        \end{subfigure}
         %add desired spacing between images, e. g. ~, \quad, \qquad, \hfill etc.
          %(or a blank line to force the subfigure onto a new line)
             \caption{Plots of example 2 }\label{fig:example2}
\end{figure}
\end{example} 

\section{Conclusions}\label{section:conclusions}
 In this paper we have treated the stochastic
stability and stabilization of a state-dependent jump linear system.
We utilized the stopping times as a pointer to capture the evolution
of the state variable, and used the Dynkin's formula to obtain sufficient
condition in terms of linear matrix inequalities. Using the sufficient
condition, we synthesize a state-feedback controller which stochastically
stabilizes the system. 

\section*{Appendix}

\textbf{Proof of Lemma  \ref{lemma:joint_Markov_process}:} 
We follow the approach used in \cite{hybridge}
and \cite{maobook} to prove. We prove in the sequel that $(x(t),\theta(t))$
is jointly Markovian which is equivalent to state that $(x(t),r(\sigma_t,t),\sigma_t)$
is a joint Markov process, because the evolution of $\theta(t)$ is
captured in $r(\sigma_t,t)$ and $\sigma_t$ from (\ref{eq:sigma_intro})
and (\ref{eq:rt_intro}). Let $\mathcal{F}_{t}$ denote the natural filtration of
$(x(t),\theta(t))$ on the interval $[0,t]$, and $\mathcal{F}^{t}$
denote the natural filtration of $(x(t),\theta(t))$ on the interval
$[t,\infty)$. Let $x(t)$ and $\theta(t)$ from (\ref{sys:system_old})
and (\ref{eq:rt_intro_old}) be denoted , starting from $t_{0}$,
as 
\begin{align*}
x(t)\triangleq x_{t}^{t_{0},x(t_{0}),\theta(t_{0})},\qquad\theta(t)\triangleq\theta_{t}^{t_{0},x(t_{0}),\theta(t_{0})}.
\end{align*}
Denote the joint process $\xi_{t}^{t_{0},\eta(t_{0})}$ starting from
$t_{0}$ as 
\begin{align*}
\xi_{t}^{t_{0},\eta(t_{0})}\triangleq\left(x_{t}^{t_{0},x(t_{0}),\theta(t_{0})},\theta_{t}^{t_{0},x(t_{0}),\theta(t_{0})}\right),
\end{align*}
with $\eta(t_{0})=\left(x_{t_{0}}^{t_{0},x(t_{0}),\theta(t_{0})},\theta_{t_{0}}^{t_{0},x(t_{0}),\theta(t_{0})}\right)$,
which is simply $(x(t_{0}),\theta(t_{0}))$. For $0<s<t$, $\xi_{t}^{s,\eta(s)}$
describes the process on $[s,\infty)$ with $\eta(s)=\left(x_{s}^{s,x(s),\theta(s)},\theta_{s}^{s,x(s),\theta(s)}\right)$,
thus it is $\mathcal{F}^{s}$-measurable. Let $\eta(s)$ be an arbitrary
$\mathcal{F}_{s}$-measurable random variable. For $0<\tau<s<t$,
$\xi_{t}^{\tau,\eta(\tau)}$ can be described as a $\mathcal{F}^{s}$-measurable
process on $[s,\infty)$ with initial condition $\xi_{s}^{\tau,\eta(\tau)}$.
Thus we can write 
\begin{align}
\xi_{t}^{\tau,\eta(\tau)}=\xi_{t}^{s,\xi_{s}^{\tau,\eta(\tau)}},\qquad\text{for}\quad0<\tau<s<t.
\end{align}
Let $B$ be a set in the $\sigma$-algebra of Borel sets on $\mathbb{R}^{n}\times S$.
Then 
\begin{align*}
Pr\{\xi_{t}^{\tau,\eta(\tau)}\in B|\mathcal{F}_{s}\}=E\left[I_{B}\left(\xi_{t}^{\tau,\eta(\tau)}\right)|\mathcal{F}_{s}\right] & =E\left[I_{B}\left(\xi_{t}^{s,\xi_{s}^{\tau,\eta(\tau)}}\right)|\mathcal{F}_{s}\right]\\
 & =E\left[I_{B}\left(\xi_{t}^{s,\eta(s)}\right)\right]_{\eta(s)=\xi_{s}^{\tau,\eta(\tau)}}\\
 & =Pr\{\xi_{t}^{s,\eta(s)}\in B\},
\end{align*}
which completes the proof. \hfill{}$\square$

\[
\vphantom{}
\]

\noindent \textbf{Proof of Lemma \ref{lemma:stopping_time}:} Let $\mathcal{G}_{t}$ denote
the natural filtration of $(x(t),r(\sigma_t,t))$ on the interval
$[0,t]$. Consider 
\begin{align*}
\{\tau_{0}\le t\}= & \bigcup_{s\in\mathbb{Q}\cap[0,t]}\{\Phi_{i_{0}}(s,0)x(0)\notin C_{i_{0}}\}\\
= & \,\,\,\,\Omega\setminus\bigcap_{s\in\mathbb{Q}\cap[0,t]}\{\Phi_{i_{0}}(s,0)x(0)\in C_{i_{0}}\}.
\end{align*}
 From the above argument, observe that, each event $\{\Phi_{i_{0}}(s,0)x(0)\in C_{i_{0}}\}$
is $\mathcal{G}_{t}$-measurable for all $s\in\mathbb{Q}\cap[0,t]$.
Consequently the event $\{\tau_{0}\le t\}$ is also $\mathcal{G}_{t}$-measurable,
as the complement of the intersection of $\mathcal{G}_{t}$-measurable
events are also $\mathcal{G}_{t}$-measurable. Thus $\tau_{0}$ is
a stopping time. The similar arguments are applied to $\tau_{1},\,\tau_{2,}\cdots.$\hfill{}
$\square$
%\addcontentsline{toc}{section}{\refname}
\bibliographystyle{plain}
\bibliography{References_syscon_13}
 
\end{document}